\newcommand{\cdag}{c^{\dagger}}%
\newcommand{\sgn}{{\rm sgn}}
\newcommand{\GM}{\mathbf{G}}%
\renewcommand{\th}{\rm{th}}%
\newtheorem{theorem}{Theorem}[section]
\begin{document}

\title{Symbolic determinant construction of perturbative expansions}
\author{Ibsal Assi}
\email{iassi@mun.ca}
\affiliation{Department of Physics and Physical Oceanography, Memorial University of Newfoundland, St. John's, Newfoundland \& Labrador, Canada A1B 3X7} 
\author{J. P. F. LeBlanc}
\affiliation{Department of Physics and Physical Oceanography, Memorial University of Newfoundland, St. John's, Newfoundland \& Labrador, Canada A1B 3X7} 

\date{\today}
\begin{abstract}
We present a symbolic algorithm for treating perturbative expansions of Hamiltonians with general two-body interactions.  The method, formally equivalent to determinant Monte Carlo methods, merges well-known analytics with the recently developed symbolic integration tool, algorithmic Matsubara integration (AMI)  that allows for the evaluation of the imaginary frequency/time integrals. By explicitly performing Wick contractions at each order of the perturbative expansion we order-by-order construct the fully analytic solution of the Green's function and self energy expansions.  A key component of this process is the assignment of momentum/frequency conserving labels for each contraction that motivates us to present a fully symbolic Fourier transform procedure which accomplishes this feat. These solutions can be applied to a broad class of quantum chemistry problems and are valid at arbitrary temperatures and on both the real- and Matsubara-frequency axis. To demonstrate the utility of this approach, we present results for simple molecular systems as well as model lattice Hamiltonians.  We highlight the case of molecular problems where our results at each order are numerically exact with no stochastic uncertainty.
\end{abstract}

\maketitle

\section{Introduction}

Perturbation theories are a fundamental tool in a physicist's arsenal for tackling interacting electron systems. In many body perturbation theory (MBPT), physical observables are expressed as an infinite series where each subsequent order is represented by an exponentially large number of contractions generated from Wick's theorem. Each contraction requires the evaluation of integrals over the set of all internal variables. There are several ways to treat MBPT numerically the most popular perhaps being Diagrammatic Monte Carlo ($\rm DiagMC$) algorithms \cite{ProkofevDiagMC2008,vanhoucke:natphys,chen:2019,Kozik2010}.   
Standard DiagMC methods suffer from the fermionic sign problem that results from the large number of contractions (diagrams) with alternating sign.\cite{ProkofevDiagMC2008,vanhoucke:natphys} In recent years, determinant methods have been introduced that can somewhat mitigate this issue.\cite{Rubtsov2005,Burovski2006,EGull2011} The connected determinant diagrammatic Monte Carlo (CDet) method was introduced to treat perturbative expansions and avoids the factorial scaling of diagrams at exponential cost\cite{RossiCDet2017,Jia2022,FedorFPM2022}  

Those methods, however, are based on the Matsubara formalism for finite temperatures and require numerical forms of analytic continuation in order to produce dynamical properties in real-frequency or real-time.  More recently the advent of algorithmic Matsubara integration (AMI)\cite{AMI,libami} method allows us to symbolically evaluate  summations over Matsubara frequencies and has been successfully applied to a number of physical problems such as the 2D Hubbard model\cite{mcniven:2021,mcniven:2022,taheri:2020,burke:2023} as well as the uniform electron gas.\cite{igor:spectral,leblanc:2022} AMI provides access to real frequency calculations via textbook analytic continuation, the replacement $i\omega_n \to \omega +i0^+$, which avoids ill-posed numerical analytic continuation schemes.\cite{Levy2016} It reduces the sampling space of internal variables minimizing the effect of the curse of dimensionality and reducing overall numerical uncertainty.

In this work, we build on determinental methods by introducing a fully algorithmic approach which we call the symbolic determinant method (symDET).  By combining the determinant approach with AMI we can apply MBPT to extremely general Hamiltonians relevant to quantum chemistry and condensed matter physics.  We start by generating Wick contractions symbolically and then proceed to Fourier transform those contractions also symbolically.  We then perform the integrals over the internal variables with the use of AMI for evaluating the Matsubara summations. In the next section, we introduce the elements of this algorithm in detail.  We then provide several applications in the following section, and provide a summary.

\section{Model and Methods}
\subsection{Two-Body Hamiltonian}

We discuss the evaluation of a very general two-body Hamiltonian
with two terms; a single-particle term, $H_0$, and a generalized four-operator interaction term, $H_V$. These are given by 
\begin{equation}
H =
\underbrace{\sum_{ab} h_{ab}\cdag_a c_b}_{H_0} +
\underbrace{\frac{1}{2} \sum_{abcd} U_{abcd} \cdag_a \cdag_c c_d c_b}_{H_V}.\vspace{10pt}
    \label{eq:h}
\end{equation}
Here $a$ and $b$ are arbitrary band indices - that might also include momenta or spin degrees of freedom - and the $c_i^\dagger$ and $c_i$ represent standard creation and annihilation operators in the state $i$, respectively, and the values of $h_{ab}$ represent one-electron integrals while $U_{abcd}$ is the two-electron interaction matrix. 
 The presumption for finding solutions to the model are that the single-particle term, $H_0$, is known and diagonal allowing us to perform an expansion in powers of the interaction term.

\subsection{Perturbative expansion of Green's function}\label{sec:IIb}

We define the non-interacting Green's function
\begin{equation}\label{eq:niG}
    G_{ba}^{0}(\tau)=-\langle c_b(\tau)c_a^\dagger(0) \rangle = \left[ (-\partial_\tau +\mu) \mathbb{1} -h \right]^{-1}_{ba},
\end{equation}
here written in imaginary time, $\tau$.  Later we will perform the Fourier transform to represent the Green's function for Matsubara frequency, $i\nu_n$. In general, $h_{ab}$ may not be diagonal which results in a non-diagonal Green's function. Without loss of generality we simplify the problem by presuming that $H_0$ can be represented in a diagonal basis and that the interaction $U_{abcd}$ is known in that diagonal basis. Thus, we can rewrite the diagonal Green's function on the Matsubara axis as
\begin{equation}
    G_{ab}^{(0)}(i\nu_n)=\frac{\delta_{ab}}{i\nu_n-h_{ab}},\label{eq:g}
\end{equation} 
where $\delta_{ab}$ is the Kronecker delta. In this representation the poles of the Green's function can be symbolically determined and this is necessary when implementing the AMI method.\cite{AMI} If $h_{ab}$ is not diagonal, then $G$ is not diagonal and the pole structure of the Green's function becomes obfuscated by the matrix inversion process.

With the target of generating the order-by-order expansion of $H_V$ we start by following the standard construction of the $m$th order correction to the imaginary time Green's function as
\begin{equation}\label{eq:sigma}
   G_{ba}^{(m)}(\tau)=\frac{(-1)^m}{ m!}\Bigg\langle \mathcal{T}\Bigg[\displaystyle\prod_{\ell=1}^{m}\int_{0}^{\beta}d\tau_{\ell}H_{V}(\tau_{\ell})\Bigg]c_{b}(\tau) \cdag_{a}(0) \Bigg\rangle_0  
\end{equation}
where $\mathcal{T}$ is the time ordering operator and $\beta=T^{-1}$ is the inverse temperature in units of the Boltzman constant $k_B$. We see that at order $m$ we must compute the expectation value of a sequence of $4m$ creation and annihilation operators attached to times $\tau_m$, in addition to the external operators $c_{b}(\tau)$ and $\cdag_{a}(0)$.

This expectation value can be evaluated using Wick's theorem, replacing the expectation value with a sum of all possible contractions of creation and annihilation operators.  This is typically accomplished in matrix form with rows and columns represented by annihilation and creation operators, respectively.  One can then generate all possible contractions - while also keeping correct track of the fermionic sign arising from commuting fermionic operators - by just taking the determinant of said matrix.\cite{RossiCDet2017,FedorFPM2022}

For this we define $\GM$ to be a $(2m+1)\times (2m+1)$ matrix in which the
rows (columns) correspond to the $2m$ annihilation (creation) operators plus an additional entry in each for the external vertices.
We introduce column and row indices $\alpha,\beta$ such that
\begin{equation}
\begin{split}
    &\{a_\alpha\} := \{a_1, c_1, a_2, c_2, \ldots, a_m, c_m, a_{out}\},\\
    &\{b_\beta\} := \{b_1, d_1, b_2, d_2, \ldots, b_m, d_m, b_{in}\},\\
\end{split}
\label{eq:colrowidx}
\end{equation}
and define the matrix elements\cite{Jia2022}
\begin{equation}
    \GM_{\beta\alpha} := -\langle c_{b_\beta}(\tau_\beta) \cdag_{a_\alpha}(\tau_\alpha) \rangle_0
    = g_{b_\beta a_\alpha}(\tau_\beta-\tau_\alpha + 0^-)=g_{\alpha \beta}.
    \label{eq:Gmatelem}
\end{equation}
The full matrix can then be written
\begin{equation}
    \GM := \left[\begin{matrix}
        g_{11}     & {g}_{12}   & \cdots    & {g}_{1n}\\
        g_{21}     & {g}_{22}   & \cdots    & {g}_{2n}\\
        \vdots          & \vdots        & \ddots    & \vdots\\
        {g}_{n1}     & {g}_{n2}   & \cdots    & {g}_{nn}
    \end{matrix}\right],
    \label{eq:Gmat}
\end{equation}
where $n=2m+1$.

This construction has been presented numerous times and forms the basis for determinant Monte Carlo methods applied to many-body systems.\cite{fedor:2020,Jia2022,rossi2017determinant,RossiCDet2017,FedorFPM2022}
In the standard prescription, the $\GM$ matrix is populated in the realspace and imaginary-time. The determinant procedure is typically evaluated numerically by inserting numerical values for the imaginary time Green's function, and sampling over all continuous times $\tau_m$.  There is one caveat to doing this is that the terms generated represent both connected and disconnected Feynman graphs.  Removing the disconnected components can be accomplished with the recent method described by Rossi et al.\cite{rossi2017determinant}.

\subsection{Algorithmic Matsubara Integration}

The method of algorithmic Matsubara integration, introduced in Ref.~\onlinecite{AMI}, was presented as a general procedure for the analytic evaluation of the temporal integrals of arbitrary Feynman diagram expansions. 
In essence, AMI is a straightforward application of residue theorem that stores the minimal information required to construct the analytic solution for an arbitrarily complex integrand comprised of a product of bare Green's functions.  The Matsubara integrals are not conceptually challenging to perform and are the topic of numerous textbook exercises.  The difficulty in performing those contour integrals lies only in that the number of poles and number of resulting analytic terms grows exponentially with diagram order.

Using the existing AMI library\cite{libami} the result of AMI is stored in three nested arrays: Signs/prefactors $S$, complex poles $P$, and Green's functions $R$. From these three objects, whose storage is quite minimal, one can then construct the analytic expression symbolically through elementary algebraic operations \cite{AMI}.

The beauty of such a result is that the analytic expression is analytic in external variables, allowing for true analytic continuation of $i\nu_n \to \nu +i0^+$, and is also an explicit function of temperature, $T$. Further, for a given graph topology the AMI procedure need only be performed once and is valid for any choice of dispersion in any dimensionality and can be applied to model systems for a wide variety of Feynman diagrammatic expansions.\cite{leblanc:2022,taheri:2020,GIT,burke:2023, farid:2023}
For the present work, we use AMI as a method for evaluating Matsubara integrands and the determinant construction replaces the usual Feynman diagram representation.   

\subsection{Symbolic determinant method - symDET}

In this work we deviate from the standard determinantal scheme mentioned in Sec.~\ref{sec:IIb}.  Here we will outline a procedure to generate the perturbative expansion in terms of bare propagators such that the integrands of Eq.~\ref{eq:sigma} are in a form suitable for AMI.\cite{AMI}  AMI operates in the energy(momentum)/frequency basis and cannot be applied to imaginary or real-time Green's functions - though there exist non-algorithmic variants designed in the same spirit that may perhaps overcome this barrier.\cite{jaksa:2020}
Each term in Eq.~\ref{eq:sigma} is represented as a function of a set of imaginary times.  To translate these to a form amenable to AMI we require tools to:
\begin{enumerate}
    \item Perform the symbolic Wick's contractions for each term in Eq.~(\ref{eq:sigma}).
    \item Identify and remove disconnected topologies. 
    \item Perform the nested sequence of Fourier transform from $\tau\to i\nu_n$ symbolically.
\end{enumerate}
We provide the solution to each issue in the following subsections.

\subsubsection{Symbolic Wick's contractions}
When creating a symbolic representation of the matrix form of Eq.~(\ref{eq:Gmat}) each element with row and column indices $\alpha$ and $\beta$ is just a function of those indices.  We can therefore generate a symbolic representation by replacing the entries  with their row and column indices, $\GM_{\alpha \beta}\to (\alpha,\beta)$. 

If we can take a determinant of this matrix and store each term separately, we will have generated the expressions that represent the $n!$ connected and disconnected diagrams. 
Evaluation of numerical determinants can be accomplished in $O(n^3)$ time, an advantage of modern determinantal methods\cite{Rossi:shiftedaction,rossi:2018,rossi2017determinant} but since we want to proceed symbolically there is no obvious route to such fast evaluations. Instead, we take the most pedantic approach and simply store the explicit parameters of each term in the determinant.  While this factorial scaling sounds problematic the tradeoff is an analytic expression that is exact to machine precision.  This is in lieu of stochastic methods that, while they can evaluate determinants quickly, must perform temporal integrals via Monte-Carlo sampling, a process that for high accuracy requires typically $10^6\to 10^8$ samples. We expect that for low orders we will arrive at a precise numerical result with fewer operations despite this factorial scaling. 

To proceed we use the Leibniz formula for an $n\times n$ matrix, $A$ with elements $a_{i,j}$:

\begin{equation}
    \det(A)=\sum_{p \in \mathcal{P}_n} \left( \sgn(p) \prod_{i=1}^n a_{i,p_i}   \right).
    \label{eq:det}
\end{equation}
In this expression, $p=(p_1,p_2,\cdots,p_n)$ is a permutation of the set $\{1,2, \dots,n \}$ and $\mathcal{P}_n$ is the set of all such permutations. $\sgn(p)$ is the signature of $p$ defined as +1 whenever the reordering requires an even number of interchanges and -1 when an odd number is required. 
Finding the permutations of $p$ and the associated signs is a straightforward computational problem.  To do this symbolically we generate a permutation $p$ and then store the indices of $a_{i,p_i}=(i,p_i)$ for each $i$. Each term in Eq.~\ref{eq:det} is then completely defined by a vector of such pairs, and a single +1/-1 sign prefactor.

This represents a major departure from typical determinantal QMC methods\cite{hirata:2015,ferrero:2018}  where such a matrix is filled with numerical values.  In our case we have yet to assign values to the entries and instead we want to store the information required to later symbolically construct the expression.  

\subsubsection{Two in one: The Symbolic Fourier Transform}
A very interesting and useful property of Feynman diagrams is that the set of possible diagram topologies is independent of coordinate and temporal labelling of each vertex. However, in the contractions of Eq.~\ref{eq:Gmat} each topology may appear multiple times - as is famously the case for a single-band problem where the $m!$ denominator is precisely cancelled by $m!$ duplicates of each topology.  
Since we have each contraction - we are free to represent each as a graph in momentum and Matsubara frequency space.  However, in doing so one would need to develop an internally consistent labelling of each graph - a process that is fundamentally non-local in diagram topology and also is not unique. 

Instead we choose to mimic the analytic process and have devised an analytic representation of the temporal Fourier transform. The procedure, detailed in Appendix~\ref{app:SFT}, sorts the contraction pairs $(i,j)$ that represent imaginary time Green's functions spanning between times $\tau_{\lfloor{\frac{i}{2}}\rfloor}$ and $\tau_{\lfloor{\frac{j}{2}}\rfloor}$.
The pairs are then separated into three lists $A$, $B$, and $C$. 
Since the contraction pairs are effectively source/target sets the connectivity of the contraction can be determined directly as is done in graph theory, identical to a depth first search, at minimal expense, scaling with the number of vertices, $n_v$, which is typically small and scales as $2^n$ for perturbation order $n$.  If at the end of the process the number of pairs in $A$ is $n-1$, in $B$ is $2$ and in $C$ we have $n$ pairs, then the diagram is connected. Now, the symbolic Fourier transform of the time integrals is done by simply converting those three lists to matrices as described in Appendix \ref{app:SFT}.  The advantage of this is that one obtains an unique set of internal labels that obey energy and momentum conservation at all vertices. The main result is given in Eq.~\ref{eqn:AMIfreq} which is a matrix with entries zero, and $\pm1$.

\subsection{Evaluation}

At this stage, our $n^{\rm th}$ order perturbative expansion is of the form
\begin{equation*}
    G_{ba}^{(n)}=\frac{(-1)^n}{n!}g_{b}(i\omega_{\rm ex})g_{a}(i\omega_{\rm ex})\times
\end{equation*}
\begin{equation}
    \sum_{e_1,\dots,e_{2n-1}}\sum_{\{\Omega_n\}}\sum_{c\in \mathcal{C}}\prod_{j}^{2n-1}g^{j}_{e_j}(\boldsymbol{\alpha_j}\cdot\boldsymbol{\omega})
\end{equation}
where the first summation is over  the internal variables,$e_i$, (e.g. orbital numbers, momenta, spin, or a mix of them etc), the second summation is over the set of internal Matsubara frequencies, and the last is over all contractions belong to the set $\mathcal{C}$. Here $\boldsymbol{\alpha_j}$ is the $j^{\rm th}$ row in Eq. \ref{eqn:AMIfreq}, $\boldsymbol{\omega}=(\Omega_1,\dots,\Omega_n,\omega_{\rm ex})^T$ and 
\begin{equation}
    g^{j}_{e_j}(\boldsymbol{\alpha_j}\cdot\boldsymbol{\omega})=\frac{1}{\boldsymbol{\alpha_j}\cdot\boldsymbol{\omega}-\varepsilon_{e_j}}
\end{equation}
is the Fourier transformed free propagator. In the case of molecular problems, or generically \textit{discrete} systems, one performs the $e_i$ summations directly such that our algorithm gives the exact value of the perturbative expansion. However, in the case of lattice problems, we use stochastic sampling over momenta, we obtain results with stochastic error-bars. In both cases, the Matsubara summations are evaluated exactly.     

\section{Applications}
\subsection{Application to Molecular Chemistry - $H_2$}
Molecular hydrogen is the simplest system to consider as a test-bed for method development and here we start with the simplest representation in the STO-6g basis which describes the interaction between the two hydrogen atoms having only 1s orbitals.  
In particular, we will see later in Section~\ref{sec:IIIc} that the two state problem is the basic component of a single-band with spin $\uparrow$/$\downarrow$ and therefore correct results for the STO-6g basis are paramount in developing the method beyond simple problems.
 We use the pyscf package\cite{pyscf} to obtain the Hartree-Fock solutions for the STO-6g basis from which we compute the self energy on the Matsubara axis illustrated in Fig. \ref{fig:H2imagfreq}.  We have compared our results in detail to those in Ref.~\cite{Jia2022} and find that our exact result is within stochastic error bars of that work.  
 Different from their result, our starting eigenstates are asymmetic resulting in distinct values of $\Sigma_{00}$ and $\Sigma_{11}$ while the off-diagonal self energy terms are zero in this case. 
 While we stop at fourth order, there is no conceptual hurdle to evaluating higher orders or larger basis sets.  However, the computational expense is factorial in order and exponential in basis.  Nevertheless, the procedure is easily parallelizeable. 
 
 The real advantage to our approach is the direct evaluation of real frequency properties.  By symbolically replacing $i\omega_n \to \omega+i\Gamma$ we can plot the self energy in real frequencies shown in Fig.~\ref{fig:H2realSig} for a particular choice of $\Gamma$ that can be made arbitrarily small.  Here we focus on a relevant frequency range where there is an expected new peak that is created by a sharp feature in ${\rm Re}\Sigma(\omega)$ such that the interacting Green's function gains one or more additional poles.  This is seen in the spectral function as shown in Fig.~\ref{fig:H2SF}. The dominant peaks remain those of the non-interacting dispersion while additional peaks - shown in the insets - appear at energies offset by the peak difference $\Delta E=h_{11}-h_{00}$ which is expected based on the second order expansion.  At fourth order shown, there are two additional poles instead of a single peak near $\omega=\pm 2$.  

As an example for a larger basis set, we compute the self-energy for $H_2$ in the 10 orbital cc-pVDZ basis representation  as shown Fig.\ref{fig:H10imagfreq}. This basis is five times larger than its STO-6g counter part, stressing our ability to study larger molecules with symDET. 
  
An interesting implication of these calculations is the ability to perform self-consistent perturbation theory beyond the well-known GF2 method. For molecular chemistry problems, this implementation of ${\rm GF}_n$ is exact at each order and is valid at finite or zero temperatures, and at any physical parameters. For example, the binding energy for molecules is obtained by generating the poles of the full propagator which is easy via the AMI  part of our code.\cite{hirata:2015}  

\begin{figure}
\label{fig:H2}
\centering
    \includegraphics[width=1.0\linewidth]{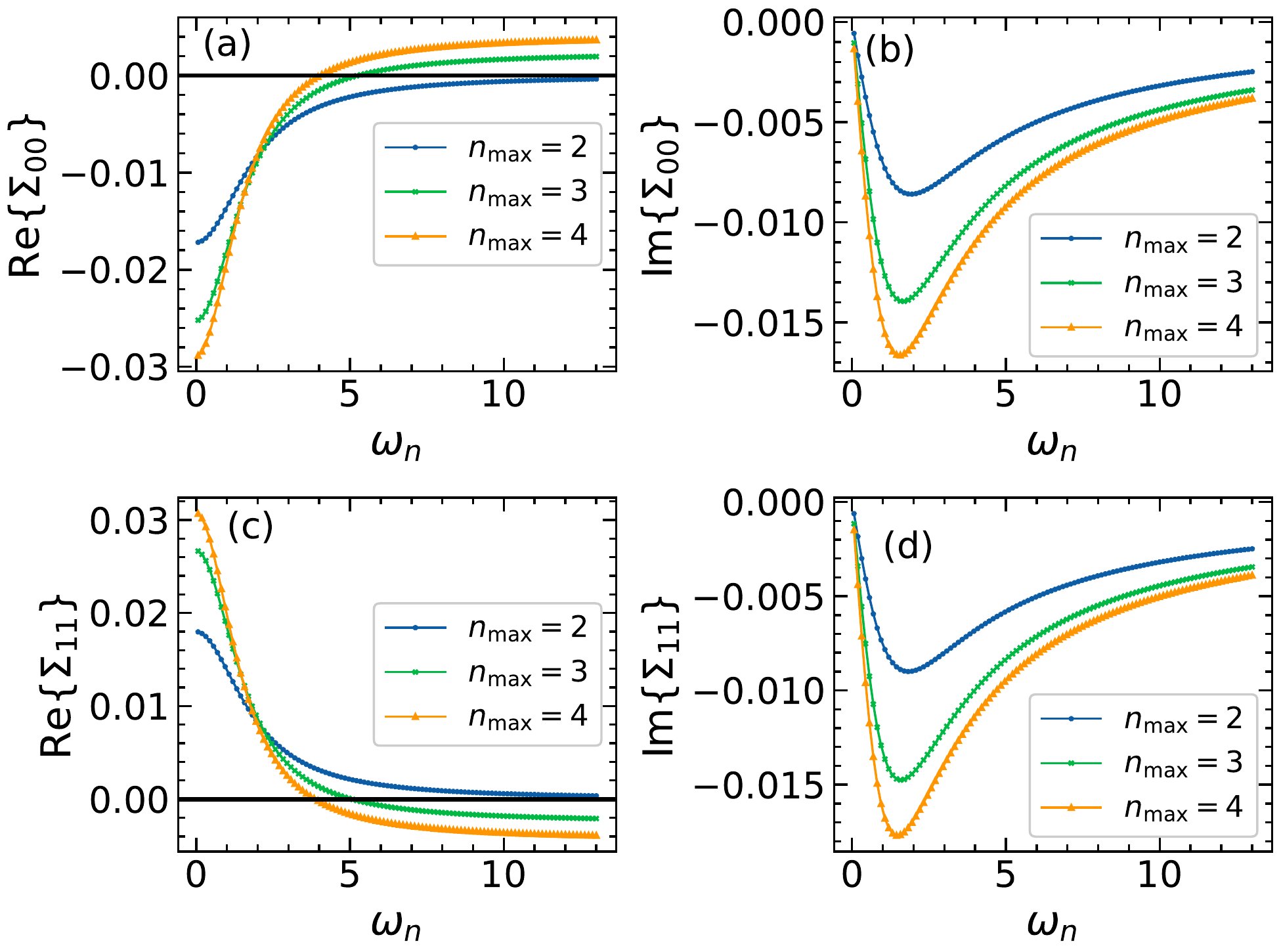}
\caption{(a) \& (b) The real and imaginary parts of the self-energy for ${\rm H}_2$ in the STO-6g basis with external band indices $a_{{\rm ex}}=b_{{\rm ex}}=0$. (c) \& (d) plots of the real and imaginary parts of $\Sigma^{{\rm H}_2}$ for $a_{{\rm ex}}=b_{{\rm ex}}=1$. Here we took $\beta=50.0$.}
\label{fig:H2imagfreq}
\end{figure}

\begin{figure}
\label{fig:H2}
\centering
    \includegraphics[width=1.0\linewidth]{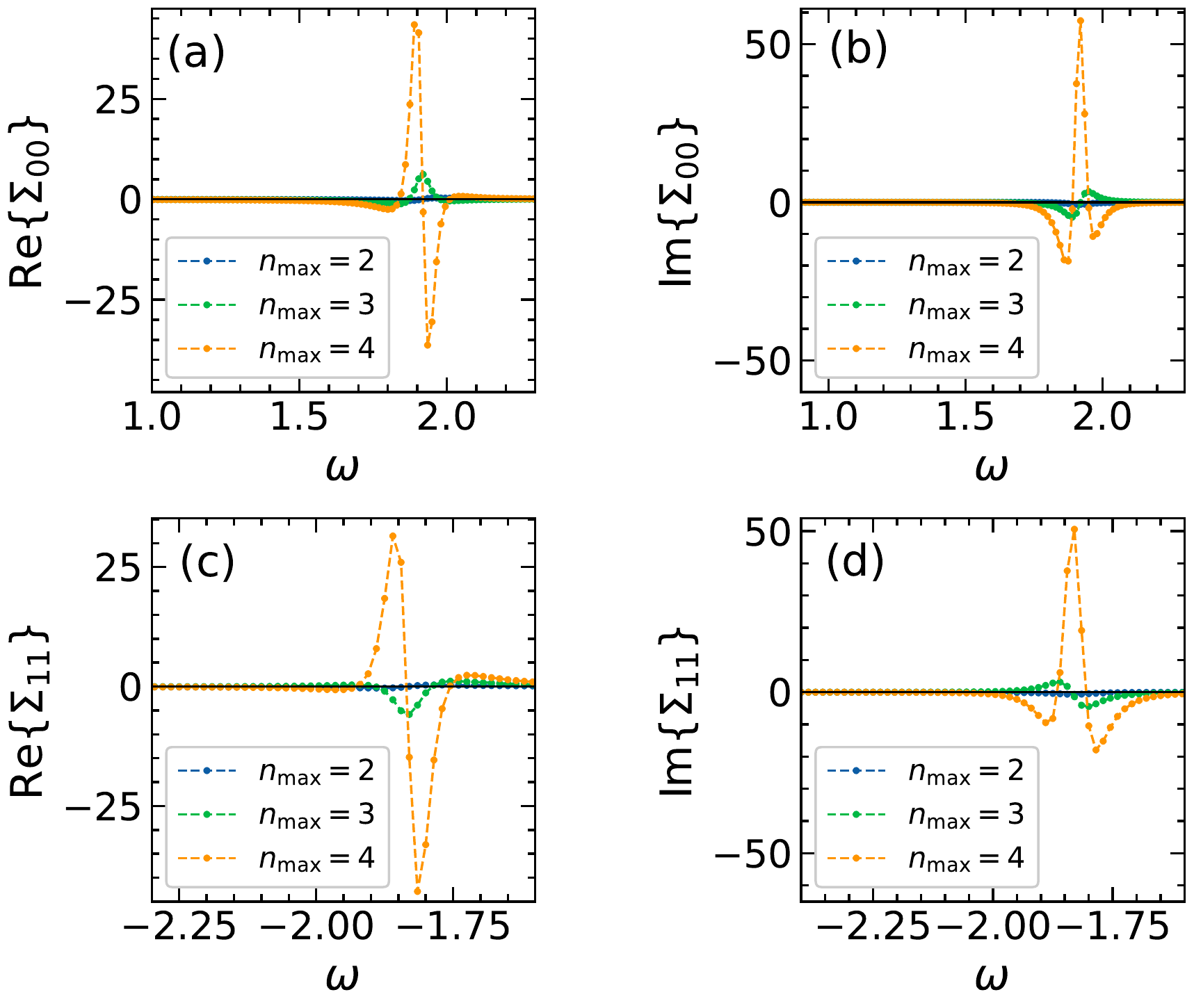}
\caption{(a) \& (b) the real and imaginary parts of $\Sigma_{00}$, while (c) \& (d) are the components of the self-energy for the second band for $H_2$ (in the STO-6g basis) on the real frequency axes. Here we took the regulator $\Gamma=0.05$}
\label{fig:H2realSig}
\end{figure}

\begin{figure}
\label{fig:H2}
\centering
    \includegraphics[width=0.9\linewidth]{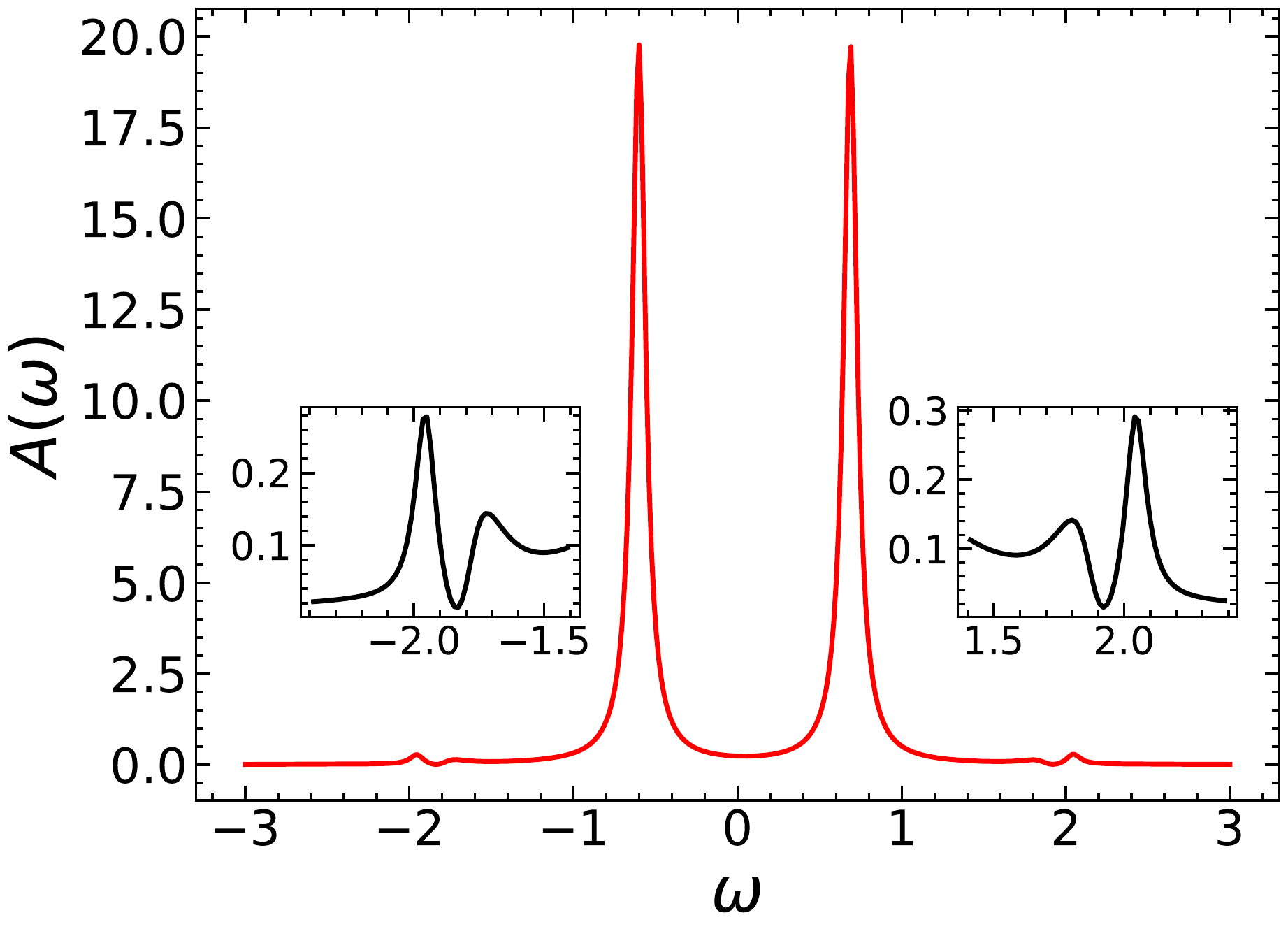}
\caption{The spectral function for ${\rm H}_2$ in the STO-6g basis truncated at 4th order. In inset data is zoom out of the extra peaks with lower intensity. Here we took the regulator 0.05. }
\label{fig:H2SF}
\end{figure}

\begin{figure}
\label{fig:H10}
\centering
    \includegraphics[width=1.0\linewidth]{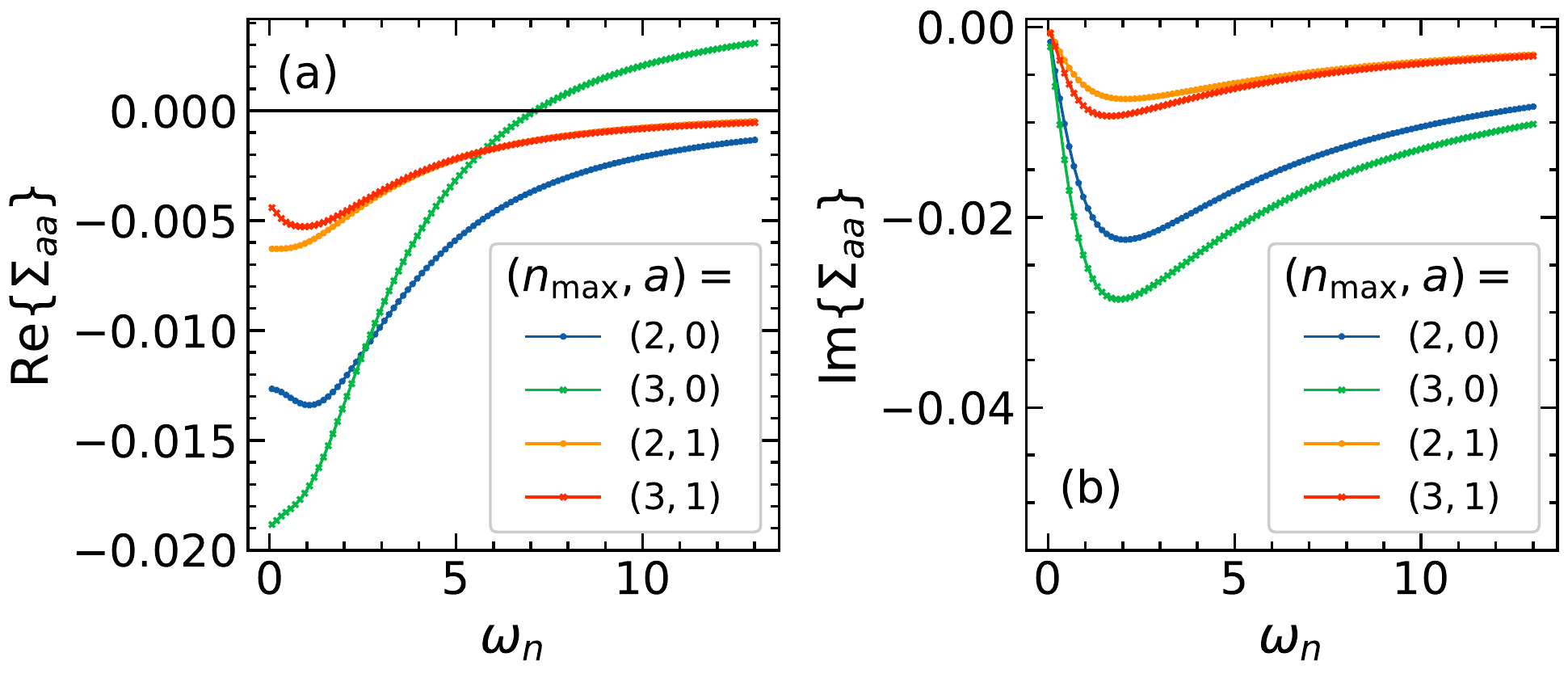}
\caption{The Matsubara self-energy for Hydrogen in the cc-pVDZ basis versus the Matsubara frequency with $\beta=50.0$. (a) The real part of the self-energy components $(0,0)$ and $(1,1)$ truncated at second ($n_{\rm max}=2$) and third ($n_{\rm max}=3$) orders, and (b) are the imaginary counterparts.} 
\label{fig:H10imagfreq}
\end{figure}

\subsection{The Hubbard-Dimer Model}
To demonstrate the versatility of our approach we study the Hubbard dimer. 
The model consists of two sites each has a spin $1/2$ particle. The model we use is described below \cite{Dimer_EG}
\begin{equation}\label{eq:hd}
    H=H_0+H_U+H_H+H_{SB}-\mu\sum_{i,\sigma} c_{i\sigma}^{\dagger}c_{i\sigma}
\end{equation}
where $H_0=-t\sum_{\sigma=\uparrow,\downarrow}(c_{0\sigma}^{\dagger}c_{1\sigma}+c_{0\sigma}c_{1\sigma}^{\dagger})$, is the hoping term for electrons between the two sites,   $H_U=U\sum_i n_{i\uparrow}n_{i\downarrow}-\frac{U}{2}\sum_{i\sigma}n_{i\sigma}$ describes the onsite interaction, $H_{H}=H\sum_{i}(n_{i\uparrow}-n_{i\downarrow})$ the interaction due to an applied magnetic field, and $H_{SB}=U_a (n_{0\uparrow}n_{0\downarrow}-n_{1\uparrow}n_{1\downarrow})+\mu_a (n_{0\uparrow}+n_{0\downarrow}-n_{1\uparrow}-n_{1\downarrow})+H_a(n_{0\uparrow}-n_{0\downarrow}-n_{1\uparrow}+n_{1\downarrow})$ is a symmetry-breaking term. By diagonalizing the quadratic part of the full Hamiltonian, we can rewrite the above Hamiltonian in the usual form
\begin{equation}
    H=\sum_{a=1}^{4}\varepsilon_a f_a^\dagger f_a+\frac{1}{2}\sum_{abcd}V_{abcd}f^\dagger_a f^\dagger_c f_d f_b
\end{equation}
where $\varepsilon_a$ is the effective dispersion, $f_a^\dagger$ ($f_a$) are the creation(annihilation) fermionic operator, and $V_{abcd}$ is the effective interaction, where both $\varepsilon_a$ and $V_{abcd}$ can be obtained analytically for this four-band system. In this example, the self-energy in this basis is not diagonal (rather a block-diagonal). As an illustration, we plot the imaginary and real parts of $\Sigma_{00}$ and $\Sigma_{01}$ up to fourth order for $U=2.5t=5.0$, $\mu=0.7$, $H=0.30$, $U_a=0.5$, $\mu_a=0.20$, $H_a=0.030$ and $\beta=2.0$.   

\begin{figure}
\centering
    \includegraphics[width=1\linewidth]{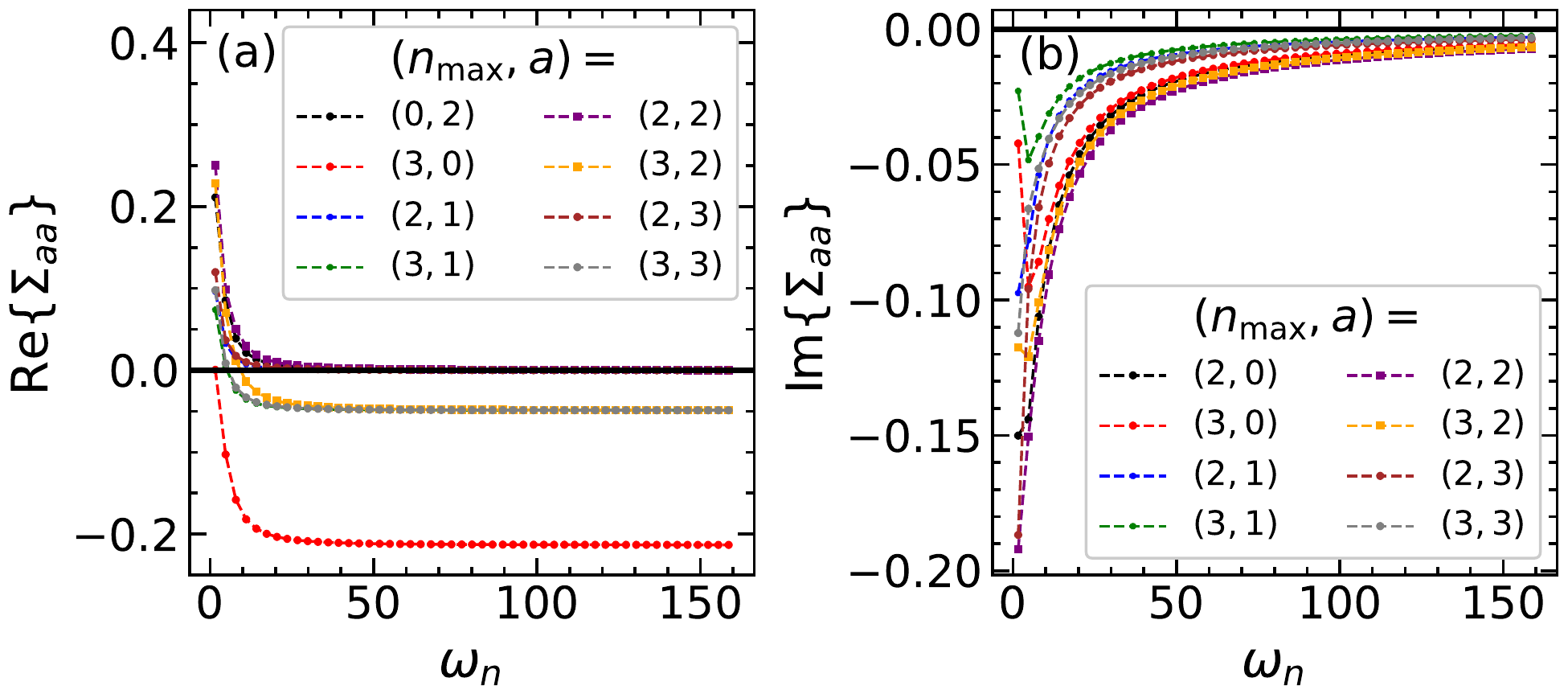}
\caption{(a)\& (b) the real and imaginary parts of the diagonal elements of the self-energy matrix for $n_{\rm max}=2,3$ for the Hubbard Dimer model. Here we took $t=1.0$, $U=2.5$, $\mu=0.70$, $H=0.30$, $U_a=0.50$, $\mu_a=0.20$, $H_a=0.030$, and $\beta=2.0$.}
\label{fig:Dimer}
\end{figure}

\subsection{Single-Band Hubbard model}\label{sec:IIIc}

The simplest starting point for considering a lattice Hamiltonian is the single-band Hubbard model of spin-$\sfrac{1}{2}$ Fermions on a square lattice.  The model is typically written in real-space notation as
\begin{eqnarray}\label{E:Hubbard}
H = \sum_{\langle ij\rangle \sigma} t_{ij}c_{i\sigma}^\dagger c_{j\sigma} + U\sum_{i} n_{i\uparrow} n_{i\downarrow},
\end{eqnarray}
where $t_{ij}$ is the hopping amplitude, $c_{i\sigma}^{(\dagger)}$ is the annihilation (creation) operator at site $i$, $\sigma \in \{\uparrow,\downarrow\}$ is the spin, $U$ is the onsite Hubbard interaction, $n_{i\sigma} = c_{i\sigma}^{\dagger}c_{i\sigma}$ is the number operator,  $\mu$ is the chemical potential, and $\langle ij \rangle$ restricts the sum to nearest neighbors. For a 2D square lattice we take $t_{ij}=-t$, resulting in the free particle energy 
\begin{eqnarray}\label{E:dispersion}
\epsilon(\textbf k)=-2t[\cos(k_x)+\cos(k_y)]-\mu.
\end{eqnarray}

Mapping this problem to Eq.~\ref{eq:h} leads to an effective problem of two degenerate bands with states $\uparrow=(k,\sigma=\uparrow)$ and $\downarrow=(k,\sigma=\downarrow)$ and the band indices are then summed over up and down basis.  This leads to a diagonal and spin independent $h_{ab}=\epsilon_k \delta_{ab}$ and an interaction term independent of momentum with entries $U_{\uparrow \uparrow \downarrow \downarrow}=U_{\downarrow \downarrow \uparrow \uparrow}=U$ and all other $U$ elements are zero. 

Due to the additional $k$-indices, after processing with AMI each $m^{\rm th}$ order wick contraction contains an $m$-dimensional integral over internal momentum vectors which requires approximate numerical integration methods to evaluate. Otherwise the procedure is unchanged from the two-band case of $H_2$ in the STO-6g basis which highlights the importance of that problem as a benchmark.
As an illustration, we have calculated the self-energy for the 2D square lattice on the Matsubara axis shown in Fig.~\ref{fig:Hubb_matsubara} for doped cases $\mu \neq 0$. Moreover, the exact same expressions can be used to generate the matching real-frequency results which we show in Fig \ref{fig:Hubb_mu}. 

\begin{figure}
\centering
    \includegraphics[width=1.0\linewidth]{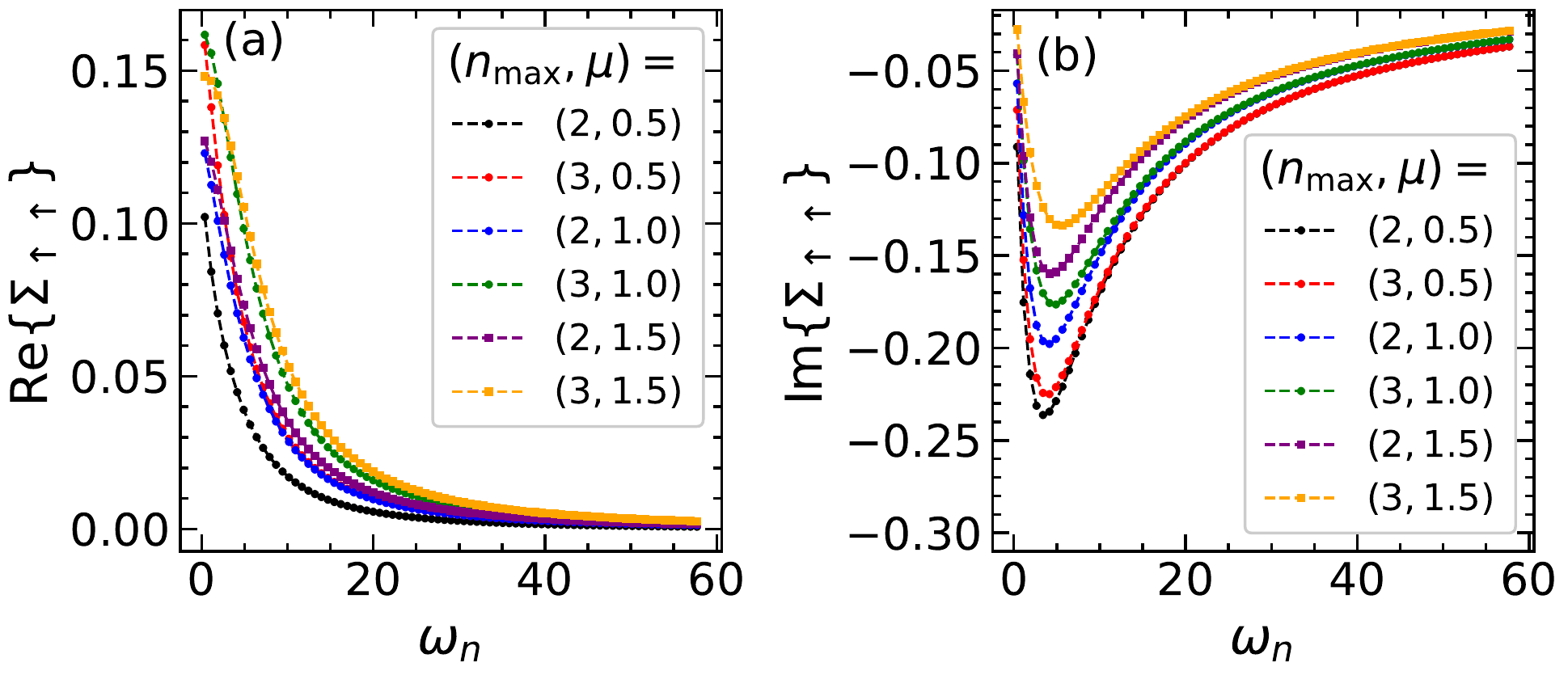}
\caption{(a) The real part of the (spin up) self-energy of the two dimensional Hubbard model for $t=1.0$, $U=3.0$, $\beta=8.33$, $\Vec{k}=(0,\pi)$, and at different values of  $\mu$ as indicated, and (b) are the imaginary counterparts.} 
\label{fig:Hubb_matsubara} 
\end{figure}

\begin{figure}
\centering
    \includegraphics[width=1.0\linewidth]{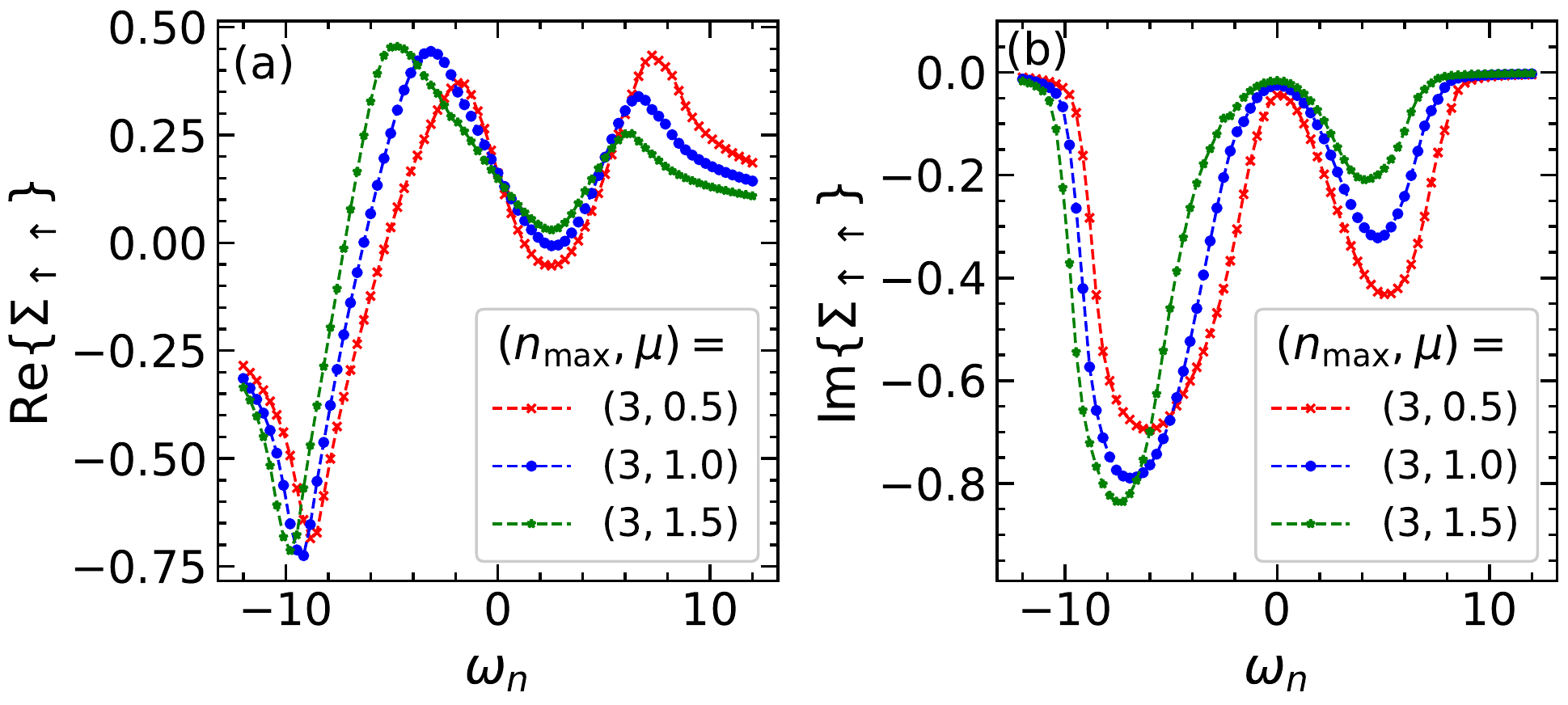}
\caption{(a) and (b) the real and imaginary parts of the self-energy (truncated at third order) versus the real frequency for the 2D Hubbard model evaluated for the parameters choice: $U=3t=3.0$, $\beta=8.33$, and $\Vec{k}=(0,\pi)$ with different values of $\mu$ as indicated. We took a Monte-Carlo sample of size $1\times 10^8$ and the regulator $\Gamma=0.2$.}
\label{fig:Hubb_mu}
\end{figure}

\section{Conclusions}
In this work we have developed an algorithm that can handle single and multiband problems for general two-body interaction models at equilibrium. The steps to our determinant method are: (1) Generating contractions by evaluating the proper determinant, (2) performing the symbolic Fourier transform, (3) using the AMI to evaluate the Matsubara summations exactly, (4) sum or sample any remaining internal degrees of freedom. 

We have applied our algorithm to a variety of problems from molecular chemistry to lattice models up to fourth order perturbation theory. The method is therefore flexible and can solve different models in both real and imaginary frequency domains allowing it to be of great importance for both quantum chemistry and lattice system applications.
The bottleneck in computation of lattice systems remains the numerical integration over remaining spatial degrees of freedom.  When the numerical regulator $\Gamma$ is small this becomes difficult due to the sharp nature of the integrands.  The use of renormalized perturbation theory might help alleviate these difficulties.\cite{burke:2023}
Finally, our algorithm, equivalent to a single shot GFn\cite{hirata:2015} exceeds what is currently available.
Although we limited ourselves to fourth order calculations, higher order corrections can be achieved, since the algorithm is valid at any arbitrary perturbation order and system size.  Of particular interest is molecular problems where we are able to evaluate each perturbative order exactly to machine precision.  In these cases, regardless of the computational expense of higher orders, since the result is exact it need only \emph{ever} be computed once.   

\begin{acknowledgments}
JPFL would like to thank George Booth and Oliver Backhouse for helpful discussion. We would like to thank as well Jia Li and Emmanuel Gull who were instrumental at the onset of this work.
We acknowledge the support of the Natural Sciences and Engineering Research Council of Canada (NSERC) RGPIN-2022-03882 and support from the Simons Collaboration on the Many Electron Problem.
Computational resources were provided by the Digital Research Alliance of Canada. Our Monte Carlo codes make use of the open source ALPSCore framework\cite{ALPSCore,alpscore_v2} and the libami package\cite{libami}.
\end{acknowledgments}

\appendix
\section{Steps of Symbolic Fourier Transformation}
\label{app:SFT}
\subsection{Sorting Wick's contractions}

An important step to perform the symbolic Fourier transformation is to sort the given Wick contraction that corresponds to a connected diagram as follows. First, let's represent the given contraction as $\mathcal{C}=[\Vec{P},s]$ where $\Vec{P}=\left(p_1,p_2,\cdots,p_{2n+1}\right)$ is a vector of pairs representing each Fermion line with $p_j=(\tau^{j}_0,\tau^j_1)$, and $s$ is the sign of the contraction. In the language of graph theory, $\Vec{P}$ contains the edges of the graph. To check if the diagram is connected or not, one can use the Depth First Search (DFS) which requires $\Vec{P}$ as an input. If $\mathcal{C}$ is connected, then we introduce three  vectors of pairs $\Vec{A}$, $\Vec{B}$, and $\Vec{C}$ where we store the pairs from $\Vec{P}$ into these three vectors based on the following convention. The pairs representing connection with external vertices are stored in $\Vec{B}$ and the pairs which representing loops, i.e. tadpole/clamshell structures, are stored in $\Vec{C}$. 

The next step to reduce the number of pairs in $\Vec{A}$ to $n-1$ which is adopted from basic graph theory fact that a given connected graph with $n$ vertices has $n-1$ edges connecting all the vertices together (plus the extra edges). This can be done recursively using the DFS by removing one pair at a time from $\Vec{A}$ and apply the DFS to check if the remaining pairs keeps all the vertices connected or not. If the removal of a given pair doesn't affect the connectedness, then the pair should be added to $\Vec{C}$, otherwise it should be put back into $\Vec{A}$ and then move to the next pair in $\Vec{A}$ and do the same steps until the number of pairs is $n-1$. At this moment, the numbers of pairs in $\Vec{C}$ is $n$, with the total number of pairs in all the three vectors is $2n+1$ as expected. The contraction $\mathcal{C}$ will have the form
\begin{equation}
\label{eqn:contrABC}
    \mathcal{C}=[\Vec{A},\Vec{B},\Vec{C},s]
\end{equation}

\subsection{Array representation of the non-interacting Green's function}
Let us assume that the fermionic line connecting two vertices $\tau_{i}$ and $\tau_{j}$ in an 
$n^{\th}$ order Feynman diagram is represented by a Green's function of the form 
$g(\eta;\tau_{i}-\tau_{j})$ where $\eta$ is a set of quantum labels attached to the 
corresponding Green's function. We introduce the following useful array representation of 
$g(\eta;\tau_{i}-\tau_{j})$  
\begin{equation}
\label{eqn:arrayGreen}
    g(\eta;\tau_{i}-\tau_{j})\vcentcolon=[V_{j}(1-\delta_{ij}),\eta],
\end{equation}
where $V_{j}\in \mathbb{R}^{n}$ is an $n$-dimensional vector defined in the following way:
\begin{itemize}
    \item If the fermionic line connecting two different internal vertices, then $V_{j}$ has $+1$ at the $i^{\rm th}$ row, $-1$ at the $j^{\rm th}$ row, and zeros elsewhere.
    \item $V_{j}$ is the $n$-dimensional zero vector if $\tau_{i}=\tau_{j}$. This is guaranteed by $\delta_{ij}$ in the equation above. 
    \item The two external fermionic lines are represented with $n$-dimensional vector with only one nonzero entry $\pm 1$. Basically, when $\tau_{j}$ external time and $\tau_{i}$ is internal time then $V_{j}$ has entry of $+1$ at the $i^{\rm th}$ row and zeros elsewhere. On the other hand, if $\tau_{i}$ is the external time, and $\tau_{j}$ is an internal time then $V_{j}$ is an entry $-1$ at the $i^{\rm th}$ row and zeros elsewhere.   
\end{itemize}
Following this notation, we can represent a Wick contraction (\ref{eqn:contrABC}) as  
\begin{equation}
\label{eqn:contrM}
    \mathcal{C}=[M,s]
\end{equation}
where $M=(A|B|C)$ is an $n\times 2n+1$ matrix obtained by mapping the pairs in $\{\Vec{A},\Vec{B},\Vec{C}\}$ into columns vectors using the convention explained above. Basically, the $n-1$ pairs in $\Vec{A}$ form an $n\times n-1$ matrix $A$, the 2 pairs in $\Vec{B}$ form an $n\times 2$ matrix B, and the $n$ pairs in $\Vec{C}$ form an $n\times n$ matrix $C$. In the next section, we will use this result to obtain the Fourier transformation of the contraction $\mathcal{C}$.     

\subsection{Symbolic Fourier Transform}

Let us assume that the fermionic lines whose vectors stored in $A$ has the dependent Matsubara frequencies $\{\omega_{1},\omega_{2},\cdots\omega_{n-1}\}$, the ones stored in $B$ has the external frequency $\omega_{\rm ex}$, and the vectors stored in $C$ has the independent Matsubara frequencies $\{\Omega_{1},\Omega_{2},\cdots\Omega_{n}\}$. Defining $\Vec{\Omega}=(\omega_{1},\omega_{2},\cdots,\omega_{n-1},\omega_{\rm ex},\omega_{\rm ex},\Omega_{1},\Omega_{2},\cdots\Omega_{n})^{t}$, then one can show that the equation that connects all the frequencies together is 
\begin{equation}
\label{eqn:M_omega}
    M\Vec{\Omega}=\Vec{0}.
\end{equation}
The above equation is thought of as the set of delta functions which act to enforce conservation laws at each vertex so long as Eq.~\ref{eqn:M_omega} is satisfied. Our task is to represent the dependent frequencies in terms of the other frequencies which is obtained using the above equation, giving 
\begin{equation}
\left(
\begin{array}{ c c c c c c l r }
\omega_{1}\\
\omega_{2}\\
\omega_{3}\\
\vdots\\
\omega_{n-1}\\
\end{array}\right)=\alpha\omega_{\rm ex}+\beta\left(
\begin{array}{ c c c c c c l r }
\Omega_{1}\\
\Omega_{2}\\
\vdots\\
\Omega_{n}\\
\end{array}\right),
\label{eqn:sol}
\end{equation}
where
\begin{equation}
   \alpha=-J^{-1}A^{T}B\left(
\begin{array}{ c c c c c c l r }
1\\
1\\
\end{array}\right), \beta= -J^{-1}A^{T}C
\end{equation}
with $J=A^{T}A$ is an $n-1\times n-1$ matrix. The above Eq.~\ref{eqn:sol} gives a unique representation of the frequency labels which satisfies the conservation laws at all internal vertices. Using this notation, a Green's function with a dependent frequency $\omega_j$, i.e. $g(\eta;\omega_j)$, will be represented as
\begin{equation}
    g_{k}(\eta_{k};\omega_j)=\frac{1}{i\beta_j\cdot\Vec{\Omega}_{\rm ind}+i\alpha_j\omega_{\rm ex}-\varepsilon_{\eta_{k}}}
\end{equation}
where $\beta_j$ is the $jth$ row in $\beta$, $\alpha_j$ is the $jth$ entry in $\alpha$, and $\Vec{\Omega}_{\rm ind}=(\Omega_{1},\Omega_{2},\cdots\Omega_{n})^{t}$. Consequently, we introduce the Fourier transformation of the Wick contraction (\ref{eqn:contrABC}) as
\begin{equation}
    \mathcal{F}\left[\mathcal{C}\right]:=[g_A,g_B,g_C,s],
\end{equation}
where
\begin{equation}
    g_A=[g_{1}(\eta_{1};\omega_1),g_{2}(\eta_{2};\omega_2),\cdots ,g_{n-1}(\eta_{n-1};\omega_{n-1})]
\end{equation}
\begin{equation}
    g_B=[g_{n}(\eta_{n};\omega_{\rm ex}),g_{2}(\eta_{n+1};\omega_{\rm ex})]
\end{equation}
and
\begin{equation}
    g_C=[g_{n+2}(\eta_{n+2};\Omega_1),\cdots,g_{2n+1}(\eta_{2n+1};\Omega_n)]
\end{equation}
where the Fourier transformed Green's functions in $\{g_B,g_C\}$ takes the following simple form
\begin{equation}
    g_{\ell}(\eta_{\ell};\omega)=\frac{1}{i\omega-\varepsilon_{\eta_{\ell}}}
\end{equation}
Finally, the AMI frequency input will simply be
\begin{equation}
\label{eqn:AMIfreq}
    \omega_{\rm AMI}=\left( \begin{array}{c|c}
   \beta &\alpha \\
   \hline
   I_n & 0 \\
\end{array}\right)
\end{equation}
where $I_n$ is an $n\times n$ identity matrix and $0$ here represents an $n$-dimensional zero vector. 
\begin{theorem}
Let $M$ be an $n\times 2n+1$ matrix representing one particular contraction belonging to specific topology $\mathcal{T}$ with $M$ satisfying (\ref{eqn:M_omega}), then the frequency matrix $\omega_{\rm AMI}$ (\ref{eqn:AMIfreq}) is unique for all contractions belonging to the same $\mathcal{T}$. 
\end{theorem}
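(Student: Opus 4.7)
\medskip
\noindent\textbf{Proof plan.} The plan is to reduce the claim to the statement that $\omega_{\rm AMI}$ encodes only graph-theoretic data of the topology $\mathcal{T}$ (incidence structure, spanning tree, cycle space, external edges) rather than the specific vertex labels $\tau_1,\dots,\tau_n$ that a particular contraction attaches to those data. First I would formalize what ``same topology'' means: two contractions $\mathcal{C}_1,\mathcal{C}_2$ with pair-vectors $\vec{P}^{(1)},\vec{P}^{(2)}$ lie in $\mathcal{T}$ iff there exists a bijection $\pi$ of the internal vertex labels (a permutation matrix $P\in\{0,1\}^{n\times n}$) and a reordering $Q$ of the $2n+1$ edges such that applying $\pi$ to the endpoints in $\vec{P}^{(1)}$ and reordering by $Q$ produces $\vec{P}^{(2)}$, with the external endpoints fixed.

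Next I would track how the $n\times(2n+1)$ incidence matrix $M=(A\,|\,B\,|\,C)$ and the frequency vector $\vec{\Omega}$ transform. A relabeling of vertices permutes the rows of $M$ by $P$; a reordering of edges (within each of the three blocks) permutes the columns by $Q$, so $M\mapsto M'=PMQ$. Correspondingly the internal frequencies $\omega_1,\dots,\omega_{n-1}$ and the loop frequencies $\Omega_1,\dots,\Omega_n$ get permuted by the block of $Q$ that acts on their columns, giving $\vec{\Omega}'=Q^{-1}\vec{\Omega}$. Then the conservation law $M\vec{\Omega}=\vec 0$ transforms to $M'\vec{\Omega}'=\vec 0$, i.e.\ the linear system is preserved up to this relabeling.

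Then I would plug into the explicit formulas. Writing $Q=\operatorname{diag}(Q_A,Q_B,Q_C)$ (block-diagonal since the algorithm respects the $A$/$B$/$C$ classification; $B$ is canonically the two edges incident to external vertices, and below I address the $A$/$C$ split), one has $A'=PAQ_A$, $B'=PBQ_B$, $C'=PCQ_C$. Hence $J'=A'^T A'=Q_A^T J Q_A$, and
\begin{align*}
\alpha' &= -J'^{-1}A'^T B'\,(1,1)^T = Q_A^{-1}\alpha,\\
\beta'  &= -J'^{-1}A'^T C' = Q_A^{-1}\beta\,Q_C,
\end{align*}
using $Q_B(1,1)^T=(1,1)^T$. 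Assembling $\omega_{\rm AMI}'$ from $\beta',\alpha',I_n,0$ and comparing with $\omega_{\rm AMI}$, the two matrices are related by simultaneous row/column permutations that precisely implement the \emph{relabeling} of the edges and loop frequencies. Since $\omega_{\rm AMI}$ is interpreted as a map from $(\Omega_1,\dots,\Omega_n,\omega_{\rm ex})$ to the $2n+1$ edge frequencies indexed by the edges of $\mathcal{T}$, these permutations are merely a relabeling of the basis of that map, so the underlying linear map—and therefore $\omega_{\rm AMI}$ as a topological invariant—is unchanged.

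The main obstacle is the apparent non-canonicity of the $A$/$C$ partition: the DFS procedure chooses a spanning tree, and different contractions could in principle yield different trees. I would address this by observing that the sorting of $\vec{P}$ and the DFS in Appendix~\ref{app:SFT} are deterministic functions of the incidence data alone, so two contractions that differ only by a vertex relabeling $P$ generate the same spanning-tree choice up to $P$—which is exactly what the transformation argument above handles. If one allows genuinely distinct spanning-tree choices, then the two resulting $\omega_{\rm AMI}$'s are related by an integer change-of-basis on the cycle space (mixing the $\Omega_i$'s among themselves), and one must invoke the fact that a Matsubara sum is invariant under such a reparameterization of the summation variables, so the AMI integrand is unchanged as a function on frequency space even if the matrix $\omega_{\rm AMI}$ is rewritten in a different basis. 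This reparameterization-invariance is the step I expect to require the most care to state rigorously.
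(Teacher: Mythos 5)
Your proof is correct and its core coincides with the paper's: the decisive step in both is that a vertex relabeling acts on $M=(A|B|C)$ as an orthogonal row permutation $P$, so $P^{T}P=I$ cancels inside $J=A^{T}A$, $A^{T}B$ and $A^{T}C$, leaving $\alpha$ and $\beta$ unchanged. Where you differ is in scope. The paper's proof considers \emph{only} the row permutation — it asserts that relabeling vertices "is equivalent to re-arranging the rows in $M$" and concludes $\tilde{\alpha}=\alpha$, $\tilde{\beta}=\beta$ as literal matrix identities, which implicitly assumes that the $2n+1$ edges keep both their ordering and their $A/B/C$ classification after relabeling. You additionally track the induced column permutation $Q=\operatorname{diag}(Q_A,Q_B,Q_C)$, obtaining $\alpha'=Q_A^{-1}\alpha$ and $\beta'=Q_A^{-1}\beta\,Q_C$, and you explicitly confront the non-canonicity of the spanning-tree ($A$ versus $C$) split — an issue the paper relegates to an unproved remark after the theorem ("the frequency labels can be not unique ... once the labels are fixed for one diagram"). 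What your generality buys is an honest account of the invariance actually established: $\omega_{\rm AMI}$ is unique as a linear map up to relabeling of edges and loop frequencies, and up to an integer change of basis on the cycle space if genuinely different trees are chosen, which is exactly the invariance the Matsubara sum respects. What it costs is that your conclusion is formally weaker than the theorem's literal claim that the matrix itself is identical; to recover that statement you must, as you note, invoke the determinism of the sorting/DFS step so that $Q_A=I$, $Q_C=I$ and the same tree is selected — the same tacit assumption the paper makes, but which you at least make visible.
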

\begin{proof}
We know that there are $2^n n!$ contractions per topology $\mathcal{T}$ at $n^{\rm th}$ order. The factor $2^n$ coming from inverting the interaction line at each vertex which essential keeps $M$ invariant. The factorial part coming from relabelling the vertices which is equivalent to re-arranging the rows in $M$. Let $P$ be an $n\times n$ orthogonal matrix that permutes the rows in $M$ bringing it to a new matrix $\tilde{M}:=(\tilde{A}|\tilde{B}|\tilde{C})=PM$. This is equivalent to setting $\tilde{A}=PA$, $\tilde{B}=PB$, and $\tilde{C}=PC$. Clearly, $\tilde{J}=\tilde{A}^T\tilde{A}=J$, $\tilde{A}^T\tilde{B}=A^TB$ and $\tilde{A}^T\tilde{C}=A^TC$. Thus, $\tilde{\alpha}=\alpha$ and $\tilde{\beta}=\beta$.
\end{proof}
The frequency labels can be not unique for a given diagram due to the several possible options of our choice of $A$ and equivalently $C$. In graph theory language, this has to do with the existence of several \textit{directed trees} that are consisting of $n-1$ edges connecting the $n$ vertices. Regardless of this starting choice, the above theorem implies that all of the sibling diagrams in the same topology will always have the same frequency labels once the labels are fixed for one diagram (the AMI input matrix Eq. \ref{eqn:AMIfreq}).

\bibliographystyle{apsrev4-1}
\bibliography{refs.bib}

\end{document}